\documentclass[aps,pra,twocolumn,secnumarabic,amsmath,amssymb]{revtex4-2}
\usepackage{amsthm}

\newtheorem{lemma}{Lemma}[section]
\newtheorem{corollary}[lemma]{Corollary}
\newtheorem{remark}{Remark}[section]
\newtheorem{ruleenv}{Rule}[section]

\begin{document}

\title{Noise redistribution by stabilizer codes:\\
an organizing view of concatenated three-qubit codes}
\author{Long Huang}
\email{huangl@sicnu.edu.cn}
\affiliation{College of Physics and Electronic Engineering, Sichuan Normal University, Chengdu 610101, China}

\begin{abstract}
We formulate stabilizer-code error correction as a code- and recovery-dependent transformation of the noise channel. For a fixed recovery map, syndrome projection produces a block-diagonal logical channel whose diagonal Pauli weights follow from code-induced error classes, with no physical Pauli twirl; concatenation is then iterated redistribution, with the dominant logical error guiding the next code. The bit-flip and phase-flip three-qubit codes with standard recoveries supply a closed-form benchmark, and logically inequivalent recoveries appear as a design degree of freedom for shaping the channel presented to the next decoder. For amplitude damping, the three-qubit rules select $C_2(\hat\sigma_y)C_1(\hat\sigma_x)$ and give $f_l=1-\mathcal O(\gamma^2)$, with an ideal-gate crossover at $\gamma_c\approx0.30$ ($f=1-1.72755\gamma^2$) and a favorable gate-count comparison with the five-qubit code under the same assumptions. Correlated noise requires joint class statistics, whereas coherent noise retains intra-class coherences that make the same transformation usable in reverse as a tomographic fingerprint; there the companion work proves local identifiability from a rank certificate evaluated inside the parameter domain, classifies the sole permutation-symmetry ambiguity as structural at any chain length, and closes it with one two-qubit marginal channel. The construction remains an analytic benchmark, not a universal prescription for an inner code beneath a surface code.
\end{abstract}
\maketitle

\section{Introduction}
\label{sec:intro}

Noise-adapted three-qubit codes for amplitude damping have been constructed using relaxed Knill--Laflamme conditions and probabilistic recovery~\cite{ref:ad3qubit}, and related noise-strength-adapted code families have been obtained by machine-learning search~\cite{ref:nsa}; recent hardware demonstrations have reached break-even logical lifetimes against native damping noise~\cite{ref:adhw}. Those constructions tailor the code to the noise model. In contrast, our framework keeps the standard bit-flip and phase-flip codes and their standard recoveries fixed, and derives concatenation rules from the syndrome partition.

More generally, it is well established that a code does not merely suppress noise: encoding and syndrome extraction also change the structure of the effective logical channel. Beale \emph{et al.} proved that stabilizer encoding and syndrome measurement decohere errors toward probabilistic Pauli noise even without recovery~\cite{ref:beale}. Iverson and Preskill analyzed the coherence of logical channels under coherent rotations~\cite{ref:iverson}, while Behrends and B\'eri developed a detailed account of logical-noise coherence in the surface code, including the fully coherent limit~\cite{ref:surfacecoh}. Thus, block diagonalization and the suppression of particular coherences are not, by themselves, new observations.

Codes have also been used deliberately as probes and as elements of noise-adaptive design. Quantum-error-correction-based characterization of dynamics (QECCD) uses code states and stabilizer-measurement statistics for process characterization~\cite{ref:qeccd,ref:qeccd2}. More recently, learning-based concatenation estimates the effective channel at each level and adapts the subsequent encoder accordingly~\cite{ref:learningconcat}, while protected stabilizer-eigenspace (PROSE) encoding selects a codespace to exploit correlated coherent noise~\cite{ref:cohframe}. These works demonstrate the practical value of treating the effective channel as a level-dependent object rather than as a fixed Pauli error rate.

The viewpoint developed here also has direct continuity with the author's earlier results. The concatenated five-qubit code was shown to construct a near-ideal effective channel from arbitrary noise, with the required concatenation size governed by the initial entanglement fidelity~\cite{ref:huang2015}. A subsequent analysis of fluctuating channels showed that the average effective channel depends only on the average physical noise; it also found convergence toward depolarizing or Pauli-dominated channels, depending on the code, together with an exponential decay of fluctuations under concatenation~\cite{ref:huang2019}. Returning to the minimal three-qubit codes made the directional structure explicit: the bit-flip and phase-flip codes polarize effective channels toward distinct Pauli axes, and their concatenation order matters~\cite{ref:huang2022}. The earlier works primarily used this channel reshaping to understand fidelity, robustness, and error suppression. Here we isolate it as the principal organizing object.

Our contribution is therefore not the claim that codes can alter noise structure, but a unified formulation of how they do so and how that transformation can be used in both directions. First, we state a general redistribution lemma: for a fixed recovery map, syndrome projection produces a block-diagonal effective logical channel, with diagonal Pauli weights determined by code-induced error classes and without requiring a physical Pauli twirl. Second, concatenation is formulated as iteration of this channel transformation, so that the dominant logical direction after one level determines the next code. Third, the same transformation can be read in reverse: code-dependent effective channels become fingerprints of the underlying noise. The companion tomography work develops this inverse use by combining the physical reference channel with polarization-dependent logical channels from one- and two-level concatenation; it proves local identifiability from an exact rational rank certificate evaluated inside the parameter domain, classifies the one symmetry-generated ambiguity and removes it with a single two-qubit marginal channel, and propagates the recovered model into decoder priors~\cite{ref:cohtomo}.

In the standard picture, QEC protects quantum information by encoding it into a code space and using syndrome measurements to identify and correct errors. We adopt the complementary channel-level viewpoint: rather than asking only which errors are corrected, we ask how the code and recovery reshape the noise channel. The syndrome measurement partitions physical error operators into classes. After syndrome projection and recovery, coherences between different classes are absent from the effective logical channel, while the diagonal Pauli weights are determined by the class probabilities. We call this process \emph{noise redistribution}.

Two design choices must be distinguished: the choice of code and the choice of recovery. Our three-qubit calculations use a fixed standard recovery, which removes additional decoder freedom and yields definite closed-form recursions. For a general code, recovery operators that differ only by stabilizer elements are logically equivalent and do not produce different effective channels. A genuine design freedom arises when the decoder admits logically inequivalent recovery assignments. Each such assignment defines its own redistribution map, and the resulting channel can be selected to match the next decoding stage. This provides a channel-level interpretation of supplying a noise-aware decoder with a calibrated prior~\cite{ref:noiseaware,ref:decgraph}.

This distinction also fixes the scope of the minimal example. The three-qubit codes are used because their fixed standard recoveries make the mechanism analytically transparent, not because every larger code has a unique recovery or because every code admits the same closed-form recursion. Nor do we propose the three-qubit code as a universally useful inner code beneath a surface code. The fixed recovery used here leaves little freedom to shape the output toward the local, weakly correlated noise assumed by many surface-code decoders, and the encoding and recovery circuits can introduce additional gates and correlated faults on connectivity-limited hardware. Such an architecture requires a separate hardware-specific analysis.

The framework does not replace the stabilizer formalism, identify a new class of correctable errors, or claim that every logical channel becomes strictly Pauli. Residual intra-class coherence can remain and is itself informative. The purpose of the framework is to make the channel transformation explicit, so that code selection, recovery design, concatenation, and characterization can be discussed within a common language.

The main results are as follows.

1. A general lemma stating that, for a fixed recovery map, one level of QEC for any stabilizer code induces a block-diagonal logical channel, with diagonal Pauli weights fixed by the code structure and with no physical Pauli twirl required for block diagonalization.
2. Deterministic concatenation rules and closed-form recursions for the two three-qubit codes under fixed standard recoveries.
3. An application to amplitude damping, giving the concatenation order \(C_2(\hat{\sigma}_y)C_1(\hat{\sigma}_x)\) and the closed-form ideal-recovery threshold \(f = 1 - 1.72755\gamma^2\).
4. A gate-count comparison with the five-qubit code~\cite{ref:fivequbit} under the ideal-recovery assumption.
5. A precise statement of the scope for coherent and correlated noise, including recovery choice as a channel-design variable and residual intra-class coherence as a tomographic observable.

Throughout this paper, all thresholds are defined in the ideal-recovery sense: encoding, syndrome measurement, and recovery operations are assumed perfect, and the quantity \(f_l\) denotes the effective channel fidelity after \(l\) concatenation levels. All numerical threshold values in this work refer to this ideal-recovery quantity, not to a fault-tolerant threshold with noisy gates.

\section{Noise redistribution by syndrome projection}
\label{sec:redistribution}

We begin with a formal statement of the mechanism. Let \(C\) be a stabilizer code and let \(\Pi_s\) be the projector onto the syndrome-\(s\) subspace. For a single-qubit channel \(\mathcal{N}\), one round of QEC on \(n\) physical qubits is described by
\[
\mathcal{E}_{\mathrm{eff}}(\rho_L)
=
\sum_s U_s \Pi_s \mathcal{N}^{\otimes n}\!\left(\rho_L \otimes |0\rangle\langle 0|_{\mathrm{anc}}\right) \Pi_s U_s^\dagger,
\]
where \(U_s\) is the recovery unitary associated with syndrome \(s\), and the trace over the ancilla is implicit.

The recovery map assigns to every correctable physical error \(E\) a logical Pauli operator \(\bar P_E \in \{\bar I,\bar X,\bar Y,\bar Z\}\). For each logical Pauli \(\bar P\), let
\[
\mathcal{C}_{\bar P} = \{E : \bar P_E = \bar P\}
\]
be the corresponding error class.

\begin{lemma}[Syndrome-projection noise redistribution]
\label{lem:redistribution}
Let \(C\) be a stabilizer code and let \(\mathcal{N}\) be a single-qubit noise channel. After one round of QEC consisting of syndrome projection and recovery, the effective logical channel has the following properties:
\begin{enumerate}
\item[(i)] It is block diagonal with respect to the syndrome partition: coherences between different syndrome classes are zero;
\item[(ii)] the diagonal Pauli weights of the effective channel depend only on the code, the fixed recovery map, and the probabilities of the physical error classes;
\item[(iii)] for each logical Pauli \(\bar P\),
\[
\Pr(\bar P) = \sum_{E \in \mathcal{C}_{\bar P}} \Pr(E),
\]
where \(\mathcal{C}_{\bar P}\) is the error class assigned to \(\bar P\);
\item[(iv)] no physical Pauli twirl is required to obtain the block-diagonal form. Residual coherences can survive only inside a single error class, and a logical Pauli twirl removes them without changing the diagonal weights.
\end{enumerate}
\end{lemma}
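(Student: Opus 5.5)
We would prove the lemma by writing everything in the Kraus/Pauli basis and tracking how each physical error term survives syndrome projection and recovery. First expand the single-qubit channel in the Pauli basis (its $\chi$-matrix representation), $\mathcal N(\rho)=\sum_{a,b}\chi_{ab}\,\sigma_a\rho\sigma_b$. Then $\mathcal N^{\otimes n}(\rho)=\sum_{E,F}\chi_{EF}\,E\rho F^\dagger$, where $E,F$ range over $n$-qubit Pauli strings and $\chi_{EF}=\prod_j\chi_{e_jf_j}$. The diagonal entries $\chi_{EE}=\Pr(E)$ are the physical error probabilities. Acting on a code state $\rho_L$, each Pauli $E$ either commutes or anticommutes with each stabilizer generator. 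Therefore $E\rho_L$ lies entirely in one syndrome subspace $s(E)$, and $\Pi_s E\Pi_0=\delta_{s,s(E)}E\Pi_0$.

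Substituting this into $\mathcal E_{\rm eff}$ gives
\[
\mathcal E_{\rm eff}(\rho_L)=\sum_s\sum_{E,F:\,s(E)=s(F)=s}\chi_{EF}\,U_sE\rho_LF^\dagger U_s^\dagger .
\]
All cross terms with $s(E)\neq s(F)$ are annihilated by the projectors. This is the block-diagonal statement (i), and it involves no twirl. For $E$ with syndrome $s$, the product $U_sE$ preserves the code space and commutes with the stabilizer up to phase, so it acts on the code space as $\omega_E\,S\,\bar P_E$. Here $S$ is a stabilizer element, which acts trivially on $\rho_L$, $\omega_E$ is a phase, and $\bar P_E$ is the assigned logical Pauli. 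This defines the classes $\mathcal C_{\bar P}$. The effective channel therefore has logical $\chi$-matrix
\[
\bar\chi_{\bar P\bar Q}=\sum_s\sum_{\substack{E\in\mathcal C_{\bar P},\,F\in\mathcal C_{\bar Q}\\ s(E)=s(F)=s}}\omega_E\bar\omega_F\chi_{EF}.
\]

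Items (ii) and (iii) come from the diagonal entries $\bar P=\bar Q$. The terms with $E=F$ give $|\omega_E|^2\chi_{EE}=\Pr(E)$. The remaining off-diagonal pairs $E\neq F$ inside the same class and syndrome are the residual coherences mentioned in (iv). We therefore define $\Pr(\bar P)$ as the diagonal logical weight and show it equals $\sum_{E\in\mathcal C_{\bar P}}\Pr(E)$. This requires arguing that the $E\neq F$ terms with $E,F$ in the same class and syndrome either vanish or are absorbed into the class probability. A pair $E\neq F$ with the same syndrome and the same logical action differs by a stabilizer element, so $E\rho_LF^\dagger=\pm E\rho_LE^\dagger$. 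The class probability should then be read as the weight of the coarse-grained operator class, that is, $\|\sum_{E\in\text{class}}\sqrt{\cdot}\|^2$, rather than a naive sum. We expect this to be the main obstacle: making the equality in (iii) exact in the presence of degenerate errors and non-Pauli $\chi$. Our first attempt would be to state (iii) for the class-coarse-grained probabilities (equivalently, assuming a Pauli-diagonal $\mathcal N$, or defining $\Pr(E)$ after grouping stabilizer-equivalent errors). We would then remark that in the Pauli-diagonal case the equality holds exactly, and that it holds to leading order otherwise. Under this reading the weights depend only on the code, the recovery assignment $s\mapsto U_s$, and the class statistics, which gives (ii).

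For (iv), the surviving off-diagonal entries $\bar\chi_{\bar P\bar Q}$ with $\bar P\neq\bar Q$ come only from pairs in the same syndrome block but different logical classes. These are the intra-syndrome coherences; no coherence links different syndrome blocks. Averaging the effective channel over conjugation by the logical Paulis, $\frac14\sum_{\bar R}\bar R\,\mathcal E_{\rm eff}(\bar R\rho\bar R)\bar R$, multiplies $\bar\chi_{\bar P\bar Q}$ by $\frac14\sum_{\bar R}\epsilon_{\bar R\bar P}\epsilon_{\bar R\bar Q}=\delta_{\bar P\bar Q}$, where the $\epsilon$ are commutation signs. This kills every remaining coherence while leaving each diagonal weight unchanged. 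Because the twirl is applied at the logical level, the block-diagonal form obtained earlier never needed a physical twirl.
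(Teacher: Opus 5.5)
Your $\chi$-matrix computation follows the paper's own argument, made explicit. The projectors kill every term with $s(E)\neq s(F)$, recovery turns each surviving string into a logical Pauli, the diagonal weights are read off, and a logical twirl removes what remains. Your treatment of (i) and (iv) is correct.

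The obstacle you isolate in (iii) is genuine, and the paper's proof does not get around it. Its sentence ``summing the probabilities of all physical error strings in $\mathcal C_{\bar P}$ gives the logical error probability'' simply drops the degenerate terms $\chi_{E,ES}$ ($S$ a stabilizer). These terms land directly on the diagonal entry $\bar\chi_{\bar P\bar P}$. So (iii) with $\Pr(E)=\chi_{EE}$ holds exactly when those terms vanish (e.g.\ for Pauli-diagonal $\mathcal N$) and fails in general. What can be proved for arbitrary $\mathcal N$ is your coset-coherent version. With $K_k=\sum_E c_{kE}E$ the Kraus operators of $\mathcal N^{\otimes n}$,
\[
\bar\chi_{\bar P\bar P}=\sum_{\Gamma\to\bar P}\,\sum_k\Bigl|\sum_{E\in\Gamma}\omega_E\,c_{kE}\Bigr|^2,
\]
where $\Gamma$ runs over the stabilizer cosets (one per syndrome) assigned to $\bar P$. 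Amplitudes are added per coset, not over all of $\mathcal C_{\bar P}$, because different syndromes do not interfere.

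Two points in your fallback are wrong. First, the naive sum does not hold ``to leading order'' for non-Pauli noise. Take $C_1$ with $\mathcal N(\rho)=e^{-i\theta Z}\rho\,e^{i\theta Z}$. Every error has trivial syndrome and each $Z_j$ acts as $\bar Z$, so the logical channel is $e^{-3i\theta\bar Z}$ and $\bar\chi_{\bar Z\bar Z}=\sin^2 3\theta\approx9\theta^2$. By contrast, $\sum_{E\in\mathcal C_{\bar Z}}\chi_{EE}=3\cos^4\theta\sin^2\theta+\sin^6\theta\approx3\theta^2$. The degenerate amplitudes add coherently, so the discrepancy appears at leading order. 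The same example contradicts the paper's later assertion that the diagonal weights depend only on the Pauli-diagonal part of $\mathcal N$. Second, assuming Pauli-diagonal noise is not ``equivalent'' to coarse-graining. It is the special case in which the coset formula collapses to the naive sum.

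Your analysis of (iv) also fixes how that item must be read. The coherences that survive, and that the logical twirl removes, join \emph{different} logical classes inside one syndrome sector. For $C_1$ under $e^{-i\theta X}$ on each qubit, the pairs $(I,X_1X_2X_3)$ and $(X_j,X_kX_l)$ with $\{j,k,l\}=\{1,2,3\}$ give $\bar\chi_{\bar I\bar X}=2i\cos^3\theta\sin^3\theta\neq0$. Coherences inside a single class $\mathcal C_{\bar P}$ are exactly the degenerate terms above. They sit on the logical diagonal, so a logical twirl cannot remove them. Hence ``inside a single error class'' in (iv) is correct only when read as ``inside a single syndrome sector''.
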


\begin{proof}
The projection onto syndrome subspaces factors the effective channel according to the measurement outcome. Coherence between two different syndrome sectors is not part of any post-selected branch and therefore does not contribute to the unconditional logical channel. Within a given sector, the recovery map is a fixed unitary from the code space back to itself, and hence preserves the logical Pauli labels assigned by the code. Summing the probabilities of all physical error strings in \(\mathcal{C}_{\bar P}\) gives the logical error probability. The remaining intra-class coherences are controlled only by the coherent part of the physical channel; they can be removed, if desired, by a logical Pauli twirl. This proves the lemma.
\end{proof}

\begin{remark}[Role of the logical twirl]
\label{rem:logical-twirl}
The block-diagonal form is always reached without a physical twirl. If a strictly Pauli logical channel is required for later analysis, the logical twirl can be applied after the recovery step; it commutes with concatenation and does not change the diagonal weights.
\end{remark}

\begin{remark}[Code geometry fixes the polarization direction]
\label{rem:geometry}
For a fixed recovery convention, the class assignment \(\mathcal{C}_{\bar P}\) is a property of the code. The direction in which the effective channel is polarized is therefore fixed before the noise model is specified. This is what makes the present concatenation rules deterministic.
\end{remark}

\begin{remark}[Recovery-dependent redistribution]
\label{rem:recovery-design}
The effective channel depends on both the code and the recovery map \(U_s\). Recovery operators related by stabilizer elements have the same logical action and therefore define the same redistribution. If logically inequivalent recovery assignments are admitted, however, they define different class maps and hence different effective channels for the same physical noise. Redistribution remains valid for each choice; the recovery can then be treated as a design variable for the noise presented to the next decoder. The closed-form three-qubit analysis below uses one fixed standard recovery and does not invoke this additional freedom.
\end{remark}

For the two three-qubit codes, the polarization statement is as follows.

\begin{corollary}[Three-qubit polarization]
\label{cor:polarization}
Let \(C_1\) be the bit-flip code correcting \(X\) errors, and let \(C_2\) be the phase-flip code correcting \(Z\) errors. For an independent Pauli channel with small \(p_x,p_y,p_z\) and \(\epsilon=\max\{p_x,p_y,p_z\}\), one level of \(C_1\) accumulates the first-order logical error probability in \(\bar Z\), and one level of \(C_2\) accumulates the first-order logical error probability in \(\bar X\). The subleading logical error probabilities are of order \(\mathcal O(\epsilon^3)\).
\end{corollary}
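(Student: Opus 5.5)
The plan is to apply Lemma~\ref{lem:redistribution}(iii) directly. Each logical weight will be written as an explicit polynomial in \(p_x,p_y,p_z\), obtained by summing \(\Pr(E)\) over the class \(\mathcal C_{\bar P}\). The polynomial is then sorted by total degree, which is the number of non-identity single-qubit factors in \(E\). Under this sorting the three claims become statements about which degree first appears in each class, plus a uniform bound on everything beyond the explicitly computed terms. I read ``subleading'' in this sense: after the first-order weight in the favored logical direction and the explicit second-order terms, every remaining contribution to every logical weight is \(\mathcal O(\epsilon^3)\).

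\emph{Step 1 (class map for \(C_1\)).} The stabilizers are \(Z_1Z_2\) and \(Z_2Z_3\), so the syndrome sees only the \(X\)-part of a Pauli string \(E\). The standard recovery flips the unique qubit that makes the \(X\)-part trivial or \(XXX\). Writing \(Y=iXZ\), I would show that the logical label of \(E\) is \(\bar X^{a}\bar Z^{b}\). Here \(a=1\) exactly when the \(X\)-part has weight \(\ge 2\), and \(b\) is the parity of the number of qubits carrying \(Z\) or \(Y\). This holds because \(Z_j\) acts as \(\bar Z\) on the code space for every \(j\), and \(XXX=\bar X\).

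\emph{Step 2 (degree counting).} A string of degree \(k\) has probability \(\prod_j p_{\sigma_j}\,(1-p_x-p_y-p_z)^{3-k}=\mathcal O(\epsilon^k)\). There are at most \(\binom{3}{k}3^k\) such strings, a constant, so the total weight of degree \(\ge 3\) is \(\mathcal O(\epsilon^3)\) uniformly in how \(p_x,p_y,p_z\) compare. Degree \(1\) strings have \(a=0\), so they land in \(\bar I\) (an \(X\) error) or in \(\bar Z\) (a \(Y\) or \(Z\) error). This gives the first-order weight \(3(p_y+p_z)\) in \(\bar Z\) and no first-order weight in \(\bar X\) or \(\bar Y\), which is the accumulation claim.

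\emph{Step 3 (second order).} Degree \(2\) strings are enumerated exactly from the class map of Step 1. An \(XX\) or \(YY\) pair goes to \(\bar X\), with weight \(3(p_x^2+p_y^2)\); an \(XY\) pair goes to \(\bar Y\), with weight \(6p_xp_y\). The remaining pairs return to \(\bar I\) or \(\bar Z\) and only correct the \(\bar I/\bar Z\) weights. Together with Step 2, each logical weight then equals its displayed polynomial through degree \(2\) plus a remainder of order \(\mathcal O(\epsilon^3)\). This is the closed-form content that the recursions below iterate.

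\emph{Step 4 (\(C_2\)).} The phase-flip code is the Hadamard conjugate of \(C_1\), which exchanges \(X\leftrightarrow Z\) on every qubit, fixes \(Y\) up to sign, and maps the \(C_1\) logicals to those of \(C_2\) with \(\bar X\leftrightarrow\bar Z\). Conjugating Steps 1–3 therefore gives the same result with \(p_x\leftrightarrow p_z\), so the first-order weight \(3(p_x+p_y)\) sits in \(\bar X\).

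The main obstacle is Step 1. The sign and phase bookkeeping for \(Y=iXZ\) and the choice of \(XXX\) versus a single \(Z_j\) as logical representatives must be handled so that the class map is stated exactly, not modulo stabilizers. Once this is settled, Steps 2–4 are finite enumeration plus the constant-count \(\mathcal O(\epsilon^3)\) tail bound.
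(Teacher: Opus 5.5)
Your class map $\bar X^{a}\bar Z^{b}$ and degree counting are correct and follow the paper's route. The paper proves the corollary only by citing \eqref{eq:rec-C1-X} and \eqref{eq:rec-C2-Z}, and these are exactly the class sums of Lemma~\ref{lem:redistribution}(iii). Your Step~1 map reproduces those polynomials, and Steps~2--4 extract their first- and second-order parts. The first-order claim is therefore established: weight $3(p_y+p_z)$ lands in $\bar Z$ for $C_1$, and $3(p_x+p_y)$ lands in $\bar X$ for $C_2$.

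What does not go through is the last sentence. The corollary means, and the paper says right after \eqref{eq:rec-C1-X}, that the non-favored weights are $\mathcal O(\epsilon^3)$: $p_{1x},p_{1y}$ for $C_1$, and $p_{1z},p_{1y}$ for $C_2$. That is false, and your Step~3 is the counterexample: $p_{1x}=3f_0(p_x^2+p_y^2)+\mathcal O(\epsilon^3)$ and $p_{1y}=6f_0p_xp_y+\mathcal O(\epsilon^3)$. For depolarizing input $p_x=p_y=p_z=p$, both equal $6p^2-8p^3$. For pure bit flips, $p_{1x}=3p^2-2p^3$, the textbook failure rate of the repetition code. These terms appear in the paper's displayed recursion, and its own amplitude-damping calculation records a non-favored weight $p_{1z}=\mathcal O(\gamma^2)$. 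Reading ``subleading'' as ``whatever remains after the displayed degree-$\le2$ terms'' does not rescue the clause. It makes it a tautology: the degree-$\ge3$ tail of any polynomial in $p_x,p_y,p_z$ is $\mathcal O(\epsilon^3)$ for every code, so it says nothing about polarization.

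You should state plainly that the clause fails as written and prove the correct version. The favored direction carries all of the first-order weight. The other two logical weights are $\mathcal O(\epsilon^2)$, with the explicit leading coefficients above. So the polarization is a first- versus second-order separation, not first versus third. A genuine $\mathcal O(\epsilon^3)$ bound on the non-favored weights needs an extra input hypothesis. For $C_1$ it holds exactly when $p_x,p_y=\mathcal O(\epsilon^{3/2})$, since $p_{1x}\ge 3f_0(p_x^2+p_y^2)$ and $6p_xp_y\le 3(p_x^2+p_y^2)$.

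A smaller point on Step~4: the $\bar X\leftrightarrow\bar Z$ relabelling is correct, but only because of the specific basis \eqref{eq:phaseflip}. That basis is $H^{\otimes3}$ of the $C_1$ basis followed by a logical Hadamard, $|0_{\mathcal L}\rangle=(|{+}{+}{+}\rangle+|{-}{-}{-}\rangle)/\sqrt2$, so a single $X_j$ acts as $\bar X$. With the basis $|{+}{+}{+}\rangle,|{-}{-}{-}\rangle$ there would be no swap, and the first-order weight would land in $\bar Z$. You should verify this against \eqref{eq:phaseflip} rather than assert it.
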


This follows from the closed-form recursions in Sec.~\ref{sec:minimal}; see Eqs.~\eqref{eq:rec-C1-X} and \eqref{eq:rec-C2-Z}.

\section{Concatenation as iterated redistribution}
\label{sec:rules}

Lemma~\ref{lem:redistribution} shows that each level of QEC transforms the noise channel into a block-diagonal channel with fixed diagonal weights. In particular, for the fixed standard recoveries used here, the effective channel after one level has a dominant logical error direction determined by the code geometry. Concatenation therefore admits a natural feedback interpretation: the output of level \(l\) becomes the input specification for level \(l+1\). No search over code families is required; the next code is selected deterministically from the polarization axis.

We denote by \(C_1\) the bit-flip code of Eq.~\eqref{eq:bitflip}, whose syndrome partition polarizes the channel to \(\bar Z\), and by \(C_2\) the phase-flip code of Eq.~\eqref{eq:phaseflip}, whose syndrome partition polarizes to \(\bar X\). The following rules follow from Lemma~\ref{lem:redistribution} and from the restriction to these two elementary codes at the physical level.

\begin{ruleenv}[Physical level]
\label{rule:physical}
For physical qubits, only \(C_1\) and \(C_2\) may be used. No other three-qubit code is needed.
\end{ruleenv}

\begin{ruleenv}[Initial \(Y\) correction]
\label{rule:initialY}
If the Pauli-\(Y\) probability is not negligible at the physical level, it must be corrected at that level. Correction of $\sigma_y$ is implemented by decomposing $Y=XZ$ and applying the corresponding $X$- or $Z$-class recovery. The code is chosen according to the relative weights of \(Z\) and \(X\):
\begin{itemize}
\item if \(p_z \ge p_x\): use \(C_1\) and correct \(\sigma_y\);
\item if \(p_z < p_x\): use \(C_2\) and correct \(\sigma_y\).
\end{itemize}
This choice is forced by the requirement that, after one level, the residual channel be polarized close to \(Z\) or \(X\); if \(Y\) is not corrected at the physical level, its components feed into both \(X\) and \(Z\) logical weights already at first order.
\end{ruleenv}

\begin{ruleenv}[Logical level]
\label{rule:logical}
At concatenation level \(l\), let \(d_l \in \{X,Y,Z\}\) be the dominant error direction of the effective channel. The next code is selected as follows:
\begin{itemize}
\item \(d_l = Z\): apply \(C_2\) and correct \(\sigma_z\);
\item \(d_l = X\): apply \(C_1\) and correct \(\sigma_x\);
\item \(d_l = Y\): apply \(C_1\) if \(p_l^z \ge p_l^x\), else \(C_2\), and correct \(\sigma_y\) in either case.
\end{itemize}
\end{ruleenv}

Equivalently, the concatenation protocol is a deterministic feedback process: the dominant logical error at level \(l\) becomes the error corrected at level \(l+1\).

Now apply these rules to amplitude damping. The Pauli-diagonal expansion of the amplitude-damping channel satisfies
\begin{equation}
p_x = p_y > p_z \qquad \text{for all } \gamma>0,
\end{equation}
where \(\gamma\) is the damping parameter. Rule~\ref{rule:initialY} therefore forces the first level to correct \(Y\); since \(p_z < p_x\), the appropriate code is \(C_2(\hat{\sigma}_y)\). After one level, Lemma~\ref{lem:redistribution} and Corollary~\ref{cor:polarization} imply that the effective channel polarizes toward \(X\). Rule~\ref{rule:logical} then selects \(C_1(\hat{\sigma}_x)\) at the second level. The protocol is therefore
\begin{equation}
C_2(\hat{\sigma}_y)\, C_1(\hat{\sigma}_x),
\end{equation}
and the higher levels continue to alternate according to the dominant logical error of the preceding level.

For independent Pauli noise, the exact logical Pauli probabilities at every level are obtained from the closed-form recursions of Sec.~\ref{sec:minimal}. For correlated or coherent noise, the selection rules remain valid because they depend only on the code geometry, but the probability weights must be recomputed from the joint error distribution; the closed-form recursions do not directly apply. This boundary is made precise in Sec.~\ref{sec:extension}.

\section{Minimal realization: three-qubit codes and recursions}
\label{sec:minimal}

The two three-qubit codes are the smallest non-trivial stabilizer codes and therefore provide the simplest explicit realization of the noise-redistribution lemma. Their syndrome partitions have only four classes, and the resulting logical channel can be written in closed form for arbitrary Pauli noise.

\subsection{Bit-flip code \(C_1\)}

The bit-flip code \(C_1\) has stabilizer \(\langle Z_1Z_2,Z_2Z_3\rangle\) and logical basis
\begin{equation}\label{eq:bitflip}
|0_{\mathcal L}\rangle = |000\rangle, \qquad
|1_{\mathcal L}\rangle = |111\rangle.
\end{equation}
Its syndrome partition contains four classes: the no-error class, the three single-qubit \(X\) errors \(X_1,X_2,X_3\), and the degenerate partners that share the same syndromes. Recovering by correcting the \(X\) class maps every error in that class to a logical Pauli operator. According to Lemma~\ref{lem:redistribution}(iii), the effective channel is obtained by counting the probabilities of all physical error strings in each logical class.

For a general Pauli channel
\begin{equation}
\varepsilon_0 = f_0 \hat{\sigma}_I + p_x \hat{\sigma}_x + p_y \hat{\sigma}_y + p_z \hat{\sigma}_z,
\end{equation}
with \(f_0+p_x+p_y+p_z=1\), the recursion for \(C_1\) correcting \(X\) errors is
\begin{equation}\label{eq:rec-C1-X}
\begin{aligned}
f_1 &= f_0^3 + 3f_0^2 p_x + 3f_0 p_z^2 + 6f_0 p_y p_z + 3p_x p_z^2, \\
p_{1z} &= p_z^3 + 3p_z^2 p_y + 3p_z f_0^2 + 6f_0 p_x p_z + 3p_y f_0^2, \\
p_{1x} &= p_x^3 + 3p_x^2 f_0 + 3p_x p_y^2 + 6p_x p_y p_z + 3f_0 p_y^2, \\
p_{1y} &= p_y^3 + 3p_y^2 p_z + 3p_y p_x^2 + 6f_0 p_y p_x + 3p_z p_x^2.
\end{aligned}
\end{equation}
For weak noise, \(f_0\simeq1\). The first-order terms in Eq.~\eqref{eq:rec-C1-X} are contained exclusively in \(p_{1z}\), while \(p_{1x}\) and \(p_{1y}\) are of order \(\mathcal O(\epsilon^3)\). Hence the effective channel is polarized toward \(\bar Z\), as stated in Corollary~\ref{cor:polarization}.

\subsection{Phase-flip code \(C_2\)}

The phase-flip code \(C_2\) has stabilizer \(\langle X_1X_2,X_2X_3\rangle\) and logical basis
\begin{equation}\label{eq:phaseflip}
\begin{aligned}
|0_{\mathcal L}\rangle &= \frac{1}{2}\bigl(|000\rangle+|011\rangle+|101\rangle+|110\rangle\bigr), \\
|1_{\mathcal L}\rangle &= \frac{1}{2}\bigl(|111\rangle+|100\rangle+|010\rangle+|001\rangle\bigr).
\end{aligned}
\end{equation}
Its syndrome partition similarly contains four classes: the identity, the three \(Z\) errors \(Z_1,Z_2,Z_3\), and their degenerate partners. Recovering by correcting the \(Z\) class polarizes the effective channel toward \(\bar X\).

The recursion for \(C_2\) correcting \(Z\) errors is
\begin{equation}\label{eq:rec-C2-Z}
\begin{aligned}
f_1 &= f_0^3 + 3f_0^2 p_z + 3f_0 p_x^2 + 6f_0 p_y p_x + 3p_z p_x^2, \\
p_{1z} &= p_z^3 + 3p_z^2 f_0 + 3p_z p_y^2 + 6p_z p_x p_y + 3f_0 p_y^2, \\
p_{1x} &= p_x^3 + 3p_x^2 p_y + 3p_x f_0^2 + 6p_x p_z f_0 + 3p_y f_0^2, \\
p_{1y} &= p_y^3 + 3p_y^2 p_x + 3p_y p_z^2 + 6p_y f_0 p_z + 3p_x p_z^2.
\end{aligned}
\end{equation}
For weak noise, the first-order terms in Eq.~\eqref{eq:rec-C2-Z} appear only in \(p_{1x}\), so the channel is polarized toward \(\bar X\).

\subsection{Correction of \(Y\) errors}

When \(Y\) errors are corrected instead of \(X\) or \(Z\), the corresponding recursions are obtained by the interchanges
\begin{equation}\label{eq:y-interchange}
C_{1}: p_{x}\leftrightarrow p_{y},\qquad C_{2}: p_{z}\leftrightarrow p_{y}.
\end{equation}
applied to Eqs.~\eqref{eq:rec-C1-X} and \eqref{eq:rec-C2-Z}, respectively. These replacements are used in the amplitude-damping analysis of Sec.~\ref{sec:threshold}.

The recovery operations for both codes can be implemented without measurement, using coherent Toffoli gates; discarding the syndrome ancilla is equivalent at the channel level to measurement and feedback. This is consistent with Remark~\ref{rem:logical-twirl}.

\section{Application to amplitude damping}
\label{sec:threshold}

We take the Pauli-diagonal part of the amplitude-damping channel (equivalently its Pauli twirl) for logical-weight counting. The coherent off-diagonal part does not enter these weights: syndrome projection removes cross-syndrome coherences, while any residual intra-class coherence is omitted by the Pauli-diagonal description, consistently with Lemma~\ref{lem:redistribution}(iv).

The amplitude-damping channel is a physically important noise model and provides a nontrivial test of the concatenation rules of Sec.~\ref{sec:rules}. Let \(\mathcal{A}_\gamma\) be the single-qubit amplitude-damping channel with Kraus operators
\begin{equation}
A_0 = \begin{pmatrix} 1 & 0 \\ 0 & \sqrt{1-\gamma} \end{pmatrix}, \qquad
A_1 = \sqrt{\gamma} \lvert 0 \rangle \langle 1 \rvert.
\end{equation}
Its Pauli-diagonal part is
\begin{equation}\label{eq:AD-pauli}
\begin{aligned}
f_0 &= \frac{(1+\sqrt{1-\gamma})^2}{4}, \qquad
p_x = p_y = \frac{\gamma}{4}, \\
p_z &= \frac{(1-\sqrt{1-\gamma})^2}{4}.
\end{aligned}
\end{equation}
The non-diagonal coherent part does not enter the logical Pauli weights: by Lemma~\ref{lem:redistribution}(iv), cross-syndrome coherences vanish after projection, and an optional logical twirl removes the remaining intra-class coherences. The correctable Pauli statistics are therefore fully described by Eq.~\eqref{eq:AD-pauli}.

For all \(\gamma>0\) we have \(p_x = p_y > p_z\), so Rule~\ref{rule:initialY} applies: the Pauli-\(Y\) probability is non-negligible and must be corrected at the physical level. Since \(p_z < p_x\), the first-level code is \(C_2\), and the correction applied is \(\sigma_y\). The second level is then selected by Rule~\ref{rule:logical}: after one level of \(C_2(\hat\sigma_y)\), the effective channel is polarized toward \(X\), so the second level uses \(C_1\) and corrects \(\sigma_x\). The concatenation protocol is therefore
\[
C_2(\hat\sigma_y)\, C_1(\hat\sigma_x).
\]
Higher levels continue to alternate, because \(C_1\) polarizes to \(Z\) and \(C_2\) polarizes to \(X\), so the dominant logical error alternates between \(X\) and \(Z\) at successive levels.

To evaluate the threshold, we use the exact recursions of Sec.~\ref{sec:minimal}. Substituting Eq.~\eqref{eq:AD-pauli} into the \(C_2\)-correct-\(Y\) recursion, obtained from Eq.~\eqref{eq:rec-C2-Z} by \(p_z \leftrightarrow p_y\), gives the one-level effective channel. The dominant logical error is \(\bar X\) with probability
\begin{equation}
p_{1x} = \frac{\gamma}{4} + \mathcal O(\gamma^2),
\end{equation}
while
\begin{equation}
p_{1z} = \mathcal O(\gamma^2), \qquad
p_{1y} = \mathcal O(\gamma^3).
\end{equation}
The one-level effective channel is therefore polarized toward \(\bar X\), in agreement with Corollary~\ref{cor:polarization}.

\subsection{Second level}

The second level uses \(C_1\) and corrects \(\sigma_x\). Denote the input parameters of this level by \((f_1, p_{1x}, p_{1y}, p_{1z})\). Substituting them into the \(C_1\)-correct-\(X\) recursion, Eq.~\eqref{eq:rec-C1-X}, gives the two-level channel \((f_2, p_{2x}, p_{2y}, p_{2z})\). For small \(\gamma\) the dominant logical error after the second level is \(\bar Z\), with
\begin{equation}
p_{2z} = \frac{3\gamma^2}{16} + \mathcal O(\gamma^3),
\end{equation}
while \(p_{2x} = \mathcal O(\gamma^3)\) and \(p_{2y} = \mathcal O(\gamma^4)\). The protocol \(C_2(\hat\sigma_y)C_1(\hat\sigma_x)\) therefore alternates the dominant logical error direction between \(X\) and \(Z\) at successive levels, exactly as predicted by the rules of Sec.~\ref{sec:rules}. Higher levels are obtained by iterating the corresponding closed-form recursions.

\subsection{Threshold for amplitude damping}

For the concatenated protocol \(C_2(\hat\sigma_y)C_1(\hat\sigma_x)\), the ideal-recovery effective-channel fidelity at level \(l\) is denoted by \(f_l\). A QEC protocol is practically effective when
\begin{equation}
F_l > F_{l-1}, \qquad l \ge 1,
\end{equation}
where \(F_l = R_l f_l\) and \(R_l\) is the overall gate accuracy rate of the level-\(l\) protocol.

Solving $f_l > f_{l-1}$ at $r=1$ numerically for $\gamma$ gives $\gamma_c \approx 0.30$. Expressed in terms of the initial channel fidelity, this corresponds to $f_c \approx 0.845$, or
\begin{equation}\label{eq:AD-threshold}
f = 1 - 1.72755\,\gamma^2.
\end{equation}
This crossover is the central quantitative result of this section. Here $r$ is the per-gate accuracy parameter and $g_l$ is the number of elementary gates in the level-$l$ protocol; $r=1$ is the ideal-gate limit. The crossover is obtained under ideal encoding and recovery; fault-tolerant thresholds with noisy gates are not addressed here.

\subsection{Comparison with the five-qubit code}

This comparison is made under the same ideal-recovery and per-gate accuracy model; it does not imply superiority in distance or fault-tolerant threshold. The five-qubit code in Ref.~\cite{ref:fivequbit} protects arbitrary single-qubit errors with one level of encoding, while the concatenated three-qubit protocol requires two levels to achieve comparable protection. The advantage of the three-qubit construction appears in the gate budget. A five-qubit code level requires \(15\) encoding gates and \(22\) decoding gates, whereas each three-qubit level requires at most \(4\) encoding gates and \(7\) decoding gates. The overall gate accuracy rate therefore decays much more slowly for the concatenated three-qubit protocol. For example, at \(f_0 = 0.9\) and \(r = \sqrt{0.999}\), the four-level three-qubit protocol gives \(F_4 = 0.94731\), while the three-level five-qubit protocol gives \(F_3 = 0.92280\). Here $F_l=r^{g_l} f_l$, where $g_l$ is the total gate count of the level-$l$ protocol. Within this idealized gate-count model, the concatenated three-qubit protocol is therefore advantageous despite requiring more levels to reach comparable logical-fidelity improvement.

\section{Boundaries of the mechanism: coherent and correlated noise}
\label{sec:extension}

Lemma~\ref{lem:redistribution} is a statement about the geometry of stabilizer codes, not about the independence or commutativity of the physical noise. Its four conclusions therefore remain valid when the single-qubit channel \(\mathcal{N}\) is coherent, and the projection argument extends, with caveats, to noise that is correlated across qubits. This section makes the scope and boundaries of the mechanism precise.

\subsection{Coherent noise}

For a coherent single-qubit channel, the effective logical channel after QEC may contain non-Pauli coherences. However, Lemma~\ref{lem:redistribution}(iv) guarantees three facts:
\begin{enumerate}
\item The diagonal (Pauli) weights of \(\mathcal{E}_{\mathrm{eff}}\) depend only on the code structure and on the Pauli-diagonal part of the physical channel.
\item Coherences between different syndrome classes are removed by the projection, regardless of the physical coherence.
\item Residual coherences survive only within each syndrome class, and they can be removed by an optional logical Pauli twirl.
\end{enumerate}

This has a practical consequence: for the purpose of selecting the next code in a concatenation, only the diagonal weights are needed. The deterministic rules of Sec.~\ref{sec:rules} therefore remain applicable to coherent noise without any modification, as long as the polarization direction is computed from the Pauli-diagonal part of the effective channel. For parameter estimation, however, the residual intra-class coherences should be retained rather than twirled away: together with the diagonal weights, they provide additional observables for coherent-noise fingerprints~\cite{ref:cohframe,ref:cohtomo}.

\subsection{Correlated noise}

For correlated noise, the joint error distribution on \(n\) qubits need not factorize into single-qubit channels. Nevertheless, the syndrome projection still assigns every multi-qubit error string to a logical Pauli class according to its syndrome. The block-diagonalization property (i) of Lemma~\ref{lem:redistribution} depends only on the structure of the code and the fixed recovery map, not on factorization of the noise; hence it holds for arbitrary correlations.

What changes is the computation of the logical Pauli probabilities. For independent noise, these are obtained by substituting single-qubit Pauli probabilities into the closed-form recursions of Sec.~\ref{sec:minimal}. For correlated noise, the probability of a logical class must be recomputed from the joint distribution,
\begin{equation}
\operatorname{Pr}(\bar P)=\sum_{E:\ \mathrm{class}(E)=\bar P}\operatorname{Pr}(E),
\end{equation}
where the sum runs over all physical error strings assigned to the logical Pauli class $\bar P$. The code-induced redistribution is therefore still well defined, but the simple recursions no longer apply.

This distinction is important for the concatenation rules. The choice of the next code is still dictated by the dominant logical error direction of the effective channel, which is determined by the joint distribution. But the quantitative threshold calculations of Sec.~\ref{sec:threshold} assume independent noise and are not directly valid for correlated errors. A full threshold treatment requires the joint class probabilities and, for coherent correlations, the corresponding amplitudes.

\subsection{Inverse use: coherent- and correlated-noise tomography}

The persistence of class assignment under correlations also makes redistribution useful as a probe. Different codes and recovery conventions project the same physical noise onto different logical observables. For coherent noise, the diagonal weights carry syndrome-class probabilities, while the residual intra-class coherences retain information about amplitudes and relative phases. A tomographic protocol should therefore preserve these coherences rather than remove them by a logical twirl~\cite{ref:cohframe,ref:cohtomo}.

The companion protocol measures seven single-qubit processes for the same 11-parameter coherent-noise family: the physical reference channel on a designated qubit, the two first-level correction channels, and four second-level channels from different concatenation orders~\cite{ref:cohtomo}. The 48 components of the reference and first-level channels already suffice: an exact rational-arithmetic rank certificate, evaluated at a rational point in the interior of the parameter domain, gives a Jacobian of full column rank on a dense open subset, so the parameters are locally identifiable at every regular point, with no continuation from outside the domain. This is a local statement only, and the map is not globally injective. The code-only channels are exactly degenerate under the qubit reflection $1\leftrightarrow3$. The designated-qubit channel improves conditioning and separates generic mirror pairs, but it does not remove them all: two physically inequivalent in-domain points with identical 112-component fingerprints are exhibited explicitly, and the ambiguity is classified as the codimension-three locus $U_3e^{-ibZ}=U_1e^{-iaZ}$, an eight-dimensional manifold with a seven-dimensional mirror-fixed subset.

The degeneracy is structural rather than a three-qubit accident, and so is its removal. On a repetition-code chain of any length the reversal is the only nontrivial site permutation that maps the generator family onto itself, so a covariant code and recovery reproduce the same invariance at every concatenation depth. Because every channel a code produces is a single-logical-qubit channel and is exactly invariant under that reversal, no depth can break the degeneracy: the repair has to come from outside the code-derived class. One two-qubit marginal does it, because the two preimages differ in the relative rotation of qubits $1$ and $2$; its cost is one two-qubit process tomography, $144$ circuits executed once, fewer than the $72$ gate uses per repetition of the four second-level channels. The sector is thus closed exactly rather than merely delimited, and what remains outside the classification is accidental coincidences unrelated to any site permutation.

The second-level channels therefore supply statistical information only. Under a fixed shot budget per channel they reduce the worst-parameter uncertainty by a factor $1.7$--$2.1$, at $6.5$ times the gate uses per repetition; at a fixed gate budget the 48-component protocol is correspondingly the more precise one, and the wall-clock advantage of the extra levels requires a gate-to-repetition cost ratio below $0.043$. Recovery is limited by calibration rather than by conditioning: $0.01$~rad at a representative point requires about $2\times10^{6}$ effective shots per fingerprint, that is $1.3\times10^{7}$ physical events at twelve settings per channel, and keeping calibration error below the corresponding shot noise requires per-gate residuals of $1.35\times10^{-4}$ (stochastic) or $3.84\times10^{-5}$ (coherent), whereas a bias aligned with the weakest parameter direction requires $1.04\times10^{-6}$. Propagating the recovered generator into a distance-three surface-code decoder, at the code-capacity level with ideal syndrome extraction, gives a $14.84$-fold encoding gain and a further $1.239$-fold correlation-aware gain over independent marginals, or $1.21$ when the prior is rebuilt from an estimate at the stated precision rather than from the true generator. These demonstrate the interface rather than a circuit-level threshold.

\subsection{Summary of the boundary}

For a fixed code and recovery map, noise redistribution follows from syndrome projection and applies to arbitrary single-qubit noise and, at the level of class assignment, to correlated noise. The limits of the present quantitative analysis are equally clear: the closed-form recursions and numerical thresholds are derived for the three-qubit codes with fixed standard recoveries under independent Pauli noise. Logically distinct recovery choices define different redistribution maps and must be analyzed separately. The conceptual message remains robust: stabilizer codes reshape noise through code structure and recovery, without requiring an explicit physical twirl.

\section{Conclusions and outlook}
\label{sec:conclusion}

We have formulated stabilizer-code QEC as a code- and recovery-dependent redistribution of noise. For a fixed recovery map, syndrome projection produces a block-diagonal logical channel without a physical Pauli twirl, and concatenation becomes the iteration of this transformation. The three-qubit codes provide a closed-form realization under fixed standard recoveries. For amplitude damping, the resulting protocol \(C_2(\hat\sigma_y)C_1(\hat\sigma_x)\) gives ideal-recovery fidelity $f_l=1-\mathcal O(\gamma^2)$, a crossover at $f=1-1.72755\gamma^2$, and a favorable gate-count comparison with the five-qubit code under the same idealized assumptions.

Conceptually, the framework separates two design choices: selecting a code and selecting a recovery. The minimal examples remove the second freedom to make the mechanism explicit. When logically inequivalent recoveries are available, recovery selection can instead shape the effective channel for the next decoder. This is a design opportunity, not a claim that all larger codes have nonunique recoveries or identical closed-form behavior. Likewise, the three-qubit example is not presented as an inner-code prescription for surface-code architectures, where connectivity, circuit noise, and error locality must be analyzed explicitly.

Future work should therefore extend the recovery-dependent class maps to larger codes, replace independent probabilities by joint statistics for correlated noise, and analyze noisy fault-tolerant circuits~\cite{ref:noisystab,ref:corrsm,ref:ftconcat}. In the inverse direction, the companion work proves local identifiability inside the domain from the 48-component fingerprint, classifies the reflection-generated degeneracy and shows that it is structural at any chain length, and closes it with a single two-qubit marginal, so that only accidental coincidences unrelated to a site permutation remain outside the classification; unrestricted global uniqueness is not claimed. Tested inversion, Fisher analysis, and a code-capacity distance-three surface-code simulation support its use for hardware diagnosis and decoder-prior construction, while calibration and drift must be controlled experimentally~\cite{ref:cohtomo}. Together, these directions connect the forward problem of designing a channel transformation with the reverse problem of inferring noise from the transformations produced by several codes and propagating the resulting model to new decoders.

\end{document}